\newtheorem{theorem}{Theorem}
\newtheorem{lemma}{Lemma}
\newtheorem{proposition}{Proposition}
\newcommand{\C}{\mathbb{C}}
\newcommand{\bP}{\mathbf{P}}
\newcommand{\bD}{\mathbf{D}}
\newcommand{\bR}{\mathbf{R}}
\newcommand{\bg}{\mathbf{g}}
\newcommand{\bU}{\mathbf{U}}
\newcommand{\bv}{\mathbf{v}}
\newcommand{\bV}{\mathbf{V}}
\newcommand{\bI}{\mathbf{I}}
\newcommand{\bn}{\mathbf{n}}
\newcommand{\bx}{\mathbf{x}}
\newcommand{\bs}{\mathbf{s}}
\newcommand{\by}{\mathbf{y}}
\newcommand{\bz}{\mathbf{z}}
\newcommand{\bB}{\mathbf{B}}
\newcommand{\bH}{\mathbf{H}}
\newcommand{\bQ}{\mathbf{Q}}
\def\notvartualgraph{1}
\begin{document}
\title{Linear One-Bit Precoding in Massive MIMO: Asymptotic SEP Analysis and Optimization}
\author{\IEEEauthorblockN{Zheyu Wu\IEEEauthorrefmark{1}\IEEEauthorrefmark{2},
		Junjie Ma\IEEEauthorrefmark{1},
		Ya-Feng Liu\IEEEauthorrefmark{1}, and
		A. Lee Swindlehurst\IEEEauthorrefmark{3}
	}
	\IEEEauthorblockA{\IEEEauthorrefmark{1}LSEC, ICMSEC, AMSS, Chinese Academy of Sciences, Beijing, China}
	\IEEEauthorblockA{\IEEEauthorrefmark{2}School of Mathematical Sciences, University of Chinese Academy of Sciences, Beijing, China}
	\IEEEauthorblockA{\IEEEauthorrefmark{3}Center for Pervasive Communications and Computing, University of California, California, United States}
	\small{Email: \{wuzy,~majunjie,~yafliu\}@lsec.cc.ac.cn, swindle@uci.edu}}

\setlength{\abovedisplayskip}{0.08cm}
\setlength{\belowdisplayskip}{0.08cm}
\setlength{\jot}{0.08cm}

\maketitle

\begin{abstract}
This paper focuses on the analysis and optimization of a class of linear one-bit precoding schemes for a downlink massive MIMO system under Rayleigh fading channels. The considered class of linear one-bit precoding is fairly general, including the well-known matched filter (MF) and zero-forcing (ZF) precoding schemes as special cases.
Our analysis is based on an asymptotic framework where the numbers of transmit antennas and users in the system grow to infinity with a fixed ratio. We show that, under the asymptotic assumption, the symbol error probability (SEP) of the considered linear one-bit precoding schemes converges to that of a scalar ``signal plus independent Gaussian noise" model. This result enables us to provide accurate predictions for the SEP of linear one-bit precoding. Additionally, we also derive the optimal linear one-bit precoding scheme within the considered class based on our analytical results.  Simulation results demonstrate the excellent accuracy of the SEP prediction and the optimality of the derived precoder.
\end{abstract}

\begin{IEEEkeywords}
Asymptotic analysis,  linear one-bit precoding, massive MIMO, random matrix theory.
\end{IEEEkeywords}

\vspace{-0.2cm}
\section{Introduction}
\vspace{-0.1cm}
 One-bit precoding has emerged as a promising technique for reducing the complexity and hardware costs of massive multiple-input multiple-output (MIMO) systems. This innovative approach involves using one-bit digital-to-analog converters (DACs) at the base station (BS), which greatly reduces the cost and power consumption of the DACs. Additionally, the constant envelope  nature of one-bit signals also enables the use of the most power{\color{black}-}efficient power amplifiers (PAs), resulting in significant energy savings for massive MIMO systems.

One-bit precoding schemes can be broadly classified into two categories: linear and nonlinear one-bit precoding.  In the linear one-bit precoding scheme, the transmit signal is obtained by simply quantizing the output of a linear precoder. This type of precoding is favorable for its simplicity and low computational complexity 
\cite{SQUID}--\cite{ZF}. 
In contrast,  nonlinear one-bit precoding employs a nonlinear mapping from the data symbol vector that we wish to recover at the receiver to the one-bit signals transmitted at the BS, and is typically obtained by solving a difficult optimization problem. Nonlinear one-bit precoding can generally achieve better symbol error rate (SER) performance than linear one-bit precoding. However, its computational complexity is also much higher, making it challenging to implement in practical systems 
\cite{SQUID},  \cite{CImodel}--\cite{conference}. 
In this paper, we focus on the analysis and optimization of the more practical linear one-bit precoder.

The simplest linear one-bit  precoder simply quantizes the output of classical linear approaches such as matched filter (MF) and zero-forcing (ZF) precoding \cite{SQUID}. In 
\cite{WFQ}--\cite{duality},
the authors  took the one-bit quantization into account and designed efficient algorithms to obtain linear one-bit precoders based on different criteria, e.g., mean square error (MSE) or signal-to-quantization-plus-interference-plus-noise ratio (SQINR). 
In addition to precoding design, there has also been growing research interest in the performance analysis of linear one-bit  precoding \cite{SQUID,MFrate,ZF}.  In particular, the authors in \cite{ZF} gave an asymptotic analysis for one-bit ZF precoding, deriving a closed-form expression for SEP under the assumption that the numbers of transmit antennas and users tend to infinity with a fixed ratio.
The analysis in \cite{ZF} is based on the Bussgang decomposition technique \cite{Bussgang}, which transforms the non-linear quantization model into a linear one consisting of a signal term and an uncorrelated distortion term.  
However, the distribution of the uncorrelated distortion term is unknown, and the analysis in \cite{ZF} assumes, heuristically, that the distortion term {\color{black} is independent of all other random variables in the system}. A rigorous justification for such a Bussgang-decomposition-based result is still lacking.

In this paper, we focus on the analysis and optimization of a wide class of linear one-bit precoders for a downlink massive MIMO system. We derive \emph{closed-form} SEP formulas for general linear one-bit precoding, under the assumption that the numbers of transmit antennas and users in the system asymptotically grow to infinity with a fixed ratio. Based on the analytical results, we also derive the \emph{optimal} linear one-bit precoder within the considered class.  
Unlike existing analyses that are based on the Bussgang decomposition, we develop a novel analytical framework that exploits a recursive characterization of Haar random matrices. Our analytical framework is general and could potentially be useful for other applications.

\vspace{-0.15cm}
\section{{\color{black}System model and} Problem Formulation}\label{formulation}
\vspace{-0.15cm}
Consider a one-bit massive MIMO system in which an $N$-antenna BS equipped with one-bit DACs simultaneously serves $K$ single-antenna users, where $K<N$. 
Assuming perfect channel state information (CSI) at the BS and infinite resolution analog-to-digital converters (ADCs) at the user side as in \cite{SQUID}--\cite{conference}, the received signal vector at the users, denoted by $\by\in\C^K$, can be modeled as 
$$\by=\bH\bx+\bn,$$
where $\bx\in\C^N$ is the transmit signal vector from the BS, ~$\bH\in\C^{K\times N}$ is the channel matrix between the BS and the users, and $\bn$ is the additive noise. As one-bit DACs are employed, the transmit signal vector must satisfy $\bx\in\{\pm\frac{\sqrt{2}}{2}\pm \frac{\sqrt{2}}{2}j\}^N,$ where unit transmit power is assumed at each antenna. We further assume that each element of $\bH$ is independently and identically distributed (i.i.d.) following $\mathcal{CN}\left(0,\frac{1}{N}\right)$ and each element of $\bn$ is i.i.d. following $\mathcal{CN}(0,\sigma^2)$. The normalization of $\bH$ is introduced as in \cite{assumption1} to ensure that the received signal power at the users does not scale with $N$.

In a linear one-bit precoder, the transmit vector is given by 
$$\bx=q(\bP\bs).$$
Here, $\bs\in\C^{K}$ is the desired symbol vector with i.i.d. elements drawn from a QPSK constellation\footnote{The results in this paper also hold for general constellation schemes. We assume QPSK to simplify the presentation.}, $\bP\in\C^{N\times K}$ is a precoding matrix, and $q(\cdot)$ models the one-bit quantizer which acts independently on the real and imaginary components of its input vector.  
In this paper, we consider a specific class of precoding matrices. Specifically, let $\bH=\bU\bD\bV^\mathsf{H}$ be the {\color{black}singular value decomposition} (SVD) of $\bH$, where $\bU\in\mathcal{U}(K)$, $\bV\in\mathcal{U}(N)$, and $\bD=\left(\begin{matrix}\text{diag}(d_1,\dots,d_K)&\mathbf{0}_{K\times (N-K)}\end{matrix}\right)\in\mathbb{R}^{K\times N}$; $\mathcal{U}(K)$ and $\mathcal{U}(N)$ denote the sets of unitary matrices in $\C^{K}$ and $\C^N$, respectively. We focus on precoding matrices of the following form: 
\begin{equation}\label{eqn:P}
\bP=\bV f(\bD)^\mathsf{T}\bU^\mathsf{H},
\end{equation}
 where 
 $f(\cdot)$ is a positive continuous function on {\color{black}$\mathbb{R}^{++}$} that acts independently on the nonzero singular values of $\bH$, i.e., $f(\bD)=\left(\begin{matrix}\text{diag}(f(d_1),\dots,f(d_K))&\mathbf{0}_{K\times (N-K)}\end{matrix}\right)$. The reasons for considering the special structure for $\bP$ in \eqref{eqn:P} are twofold. First, the structure  in \eqref{eqn:P} is quite general and encompasses several classical precoding matrices including the MF and ZF precoders (by specifying $f(d)=d$ and $f(d)=d^{-1}$, respectively) as special cases. Second, this special structure exhibits favorable statistical properties, as will be shown in Section \ref{analysis}. 
 
With the above linear one-bit precoding scheme,  the system model becomes
\begin{equation}\label{sysmodel}
\by=\bH q(\bP\bs)+\bn=\bU\bD\bV^\mathsf{H} q(\bV f(\bD)^\mathsf{T}\bU^\mathsf{H}\bs)+\bn.
\end{equation}
At the user side, we assume as in \cite{SQUID}--\cite{conference} that symbol-wise nearest-neighbor decoding is employed,  i.e., each user $k$ maps
its received signal $y_k$ to the nearest constellation point $\hat{s}_k$. In the following, we first analyze the SEP  performance of the $K$ users, defined as
$\text{SEP}_k=\mathbb{P}\left(\hat{s}_k\neq s_k\right),$ for a given precoder (i.e., given $f$) in Section \ref{analysis} and then 
 optimize $f(\cdot)$ in terms of the SEP performance in Section \ref{optimization}.
\section{Asymptotic SEP Analysis}\label{analysis}
In this section, we derive the SEP of the considered linear one-bit precoder. Our analysis is based on an asymptotic framework commonly used in the performance analysis of massive MIMO systems (see, e.g., \cite{assumption1}). Specifically, we assume that both the number of transmit antennas and the number of users tend to infinity with a fixed ratio, i.e., $N,K\to\infty,$ $\frac{N}{K}=\gamma>1$.  We show that, under such an asymptotic assumption, the SEP of the considered linear one-bit precoding scheme
converges to 
that of a simple ``signal plus independent Gaussian noise'' model. In the following, we will first give our main result and related discussions in Section \ref{3a} and then provide a proof sketch in Section \ref{3b}.
\vspace{-0.1cm}
\subsection{Main Result}\label{3a}
\vspace{-0.1cm}
Our main result on the SEP performance of model \eqref{sysmodel} is summarized as follows.
\vspace{-0.15cm}
\begin{theorem}\label{The1}
Consider the system in Section \ref{formulation}.  As $N,K\to\infty$ with $\frac{N}{K}=\gamma>1$, we have\vspace{0.05cm}
\[
\lim\limits_{N,K\to\infty}\text{\normalfont{SEP}}_k=\overline{\text{\normalfont{SEP}}},\quad \forall\, k\in[K],
\]
where  $\overline{\text{\normalfont{SEP}}}$ is the SEP of the following asymptotic scalar model: 
\begin{equation}\label{eqn:asym}
\bar{y}=\overline{T}_s\, s+\overline{T}_{g}\,g+n.
\end{equation}
In the above model \eqref{eqn:asym}, $\{s,g,n\}$ are independent with $s$ uniformly drawn from a QPSK constellation, $g\sim\mathcal{CN}(0,1), ~n\sim\mathcal{CN}(0,\sigma^2)${\color{black};  $\overline{T}_s$ and $\overline{T}_g$ are two constants given by }
\begin{equation}\label{TsTg}
\begin{aligned}
\overline{T}_s&=\sqrt{\frac{2\gamma}{\pi\,\mathbb{E}\left[f^2(d)\right]}}\,\mathbb{E}[d\,f(d)],\\
\overline{T}_g&=\sqrt{\frac{2\gamma\,\text{\normalfont{var}}[d\,f(d)]}{\pi\,\mathbb{E}[f^2(d)]}+1-\frac{2}{\pi}},
\end{aligned}
\end{equation}
{\color{black}where} $d=\sqrt{\lambda}$ and $\lambda$ follows the Marchenko-Pastur distribution, whose probability density function is 
\begin{equation*}\label{MPdistribution}
p_{\lambda}(x)=\frac{\sqrt{(x-a)_+(b-x)_+}}{2\pi cx}
\end{equation*} with 
$a=(1-\sqrt{c})^2, b=(1+\sqrt{c})^2, c=\frac{1}{\gamma}$, and $(x)_+=\max\{x,0\}.$ 
\end{theorem}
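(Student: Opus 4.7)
The plan is to exploit the mutual independence of the Haar matrices $\bU$, $\bV$ and the singular values $\bD$ in $\bH=\bU\bD\bV^{\mathsf{H}}$ and reduce \eqref{sysmodel} to the announced scalar model by propagating the randomness of $\bV$ through the one-bit quantizer and the randomness of $\bU$ through the SVD-domain signal path. First I would define $\tilde{\bs}=f(\bD)^{\mathsf{T}}\bU^{\mathsf{H}}\bs\in\C^N$ (supported on its first $K$ coordinates $\tilde{\bs}_1$) and condition on $(\bU,\bD,\bs)$, so that the model reads $\by=\bU\bD\bV^{\mathsf{H}}q(\bV\tilde{\bs})+\bn$ with $\bV$ Haar on $\mathcal{U}(N)$.

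The core step is to analyze $\bV^{\mathsf{H}}q(\bV\tilde{\bs})$ without invoking a Bussgang-type independence hypothesis. Setting $\bv_0:=\bV\tilde{\bs}/\|\tilde{\bs}\|$, the vector $\bv_0$ is uniform on the complex unit sphere in $\C^N$, and since $q$ is scale-invariant we have $q(\bV\tilde{\bs})=q(\bv_0)$. I would then decompose orthogonally $q(\bv_0)=\alpha\bv_0+\bz^{\perp}$, with $\alpha=\bv_0^{\mathsf{H}}q(\bv_0)\in\C$ and $\bv_0^{\mathsf{H}}\bz^{\perp}=0$. Using $v_{0,i}\approx g_i/\sqrt{N}$ with $g_i\sim\mathcal{CN}(0,1)$ i.i.d.\ and the identity $\bar g\,q(g)=\tfrac{\sqrt{2}}{2}(|\mathrm{Re}\,g|+|\mathrm{Im}\,g|)+j(\cdots)$, concentration yields $\alpha/\sqrt{N}\to\sqrt{2/\pi}$ and $\|\bz^{\perp}\|^2/N=1-|\alpha|^2/N\to 1-2/\pi$. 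Letting $\bar{\be}:=\tilde{\bs}/\|\tilde{\bs}\|$, I would write $\bV^{\mathsf{H}}q(\bV\tilde{\bs})=\alpha\bar{\be}+\bV^{\mathsf{H}}\bz^{\perp}$, and to characterise $\bV^{\mathsf{H}}\bz^{\perp}$ I would invoke the recursive characterization of Haar matrices announced in the introduction: choose a deterministic unitary $\bQ$ with $\bQ\be_1=\bar{\be}$ and write $\bV=\bV'\bQ^{\mathsf{H}}$, where $\bV'$ is Haar with first column $\bv_0=\bV'\be_1$; conditionally on $\bv_0$, the remaining columns of $\bV'$ form a uniform orthonormal basis of $\bv_0^{\perp}$. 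Consequently $\bV^{\mathsf{H}}\bz^{\perp}$ is uniformly distributed on the sphere of radius $\|\bz^{\perp}\|$ inside $\bar{\be}^{\perp}\subset\C^N$, and asymptotically behaves as independent $\mathcal{CN}(0,1-2/\pi)$ in every direction transverse to $\bar{\be}$.

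Substituting back, the received signal reduces to
\[
\by=\frac{\alpha}{\|\tilde{\bs}\|}\,\bU\,\mathrm{diag}\bigl(d_jf(d_j)\bigr)\bU^{\mathsf{H}}\bs+\bU\bD\bV^{\mathsf{H}}\bz^{\perp}+\bn.
\]
I would then process the $\bU$-randomness in the signal term by Haar integration: the diagonal entries of $\bU\,\mathrm{diag}(df)\bU^{\mathsf{H}}$ concentrate at $\mathbb{E}_{\mathrm{MP}}[df]$, while the row-wise off-diagonal squared norm concentrates at $\mathrm{var}_{\mathrm{MP}}[df]$; applying the classical CLT to the sum against the independent QPSK symbols $\{s_l\}_{l\ne k}$ yields an asymptotic $\mathcal{CN}(0,\mathrm{var}_{\mathrm{MP}}[df])$ fluctuation. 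Combined with $\|\tilde{\bs}\|^2/K\to\mathbb{E}_{\mathrm{MP}}[f^2]$ (obtained analogously from the uniformity of $\bU^{\mathsf{H}}\bs$ on the sphere of radius $\sqrt{K}$) and $\alpha/\sqrt{N}\to\sqrt{2/\pi}$, the signal term contributes $\overline{T}_s\,s_k$ plus an independent $\mathcal{CN}\bigl(0,\tfrac{2\gamma\,\mathrm{var}[df]}{\pi\,\mathbb{E}[f^2]}\bigr)$ term. For the remaining term $(\bU\bD\bV^{\mathsf{H}}\bz^{\perp})_k$, the conditional variance $(1-2/\pi)\sum_{j=1}^{K}|u_{kj}|^2d_j^2$ concentrates at $(1-2/\pi)\mathbb{E}_{\mathrm{MP}}[d^2]=1-2/\pi$ because $\mathbb{E}_{\mathrm{MP}}[\lambda]=1$, and the term is asymptotically $\mathcal{CN}$ by a CLT over its many summands. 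The three Gaussian contributions, arising respectively from $\bU$, $\bV$, and $\bn$, are asymptotically independent, and summing them yields the model \eqref{eqn:asym} with exactly the $\overline{T}_s,\overline{T}_g$ claimed; the SEP convergence follows because symbol-wise nearest-neighbor decoding of QPSK depends only on $\overline{T}_s$, $\overline{T}_g$, and $\sigma$.

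The main obstacle is the Gaussianity step for $\bV^{\mathsf{H}}\bz^{\perp}$. Unlike the Bussgang approach, where the orthogonality $\bv_0\perp\bz^{\perp}$ is simply postulated to produce an independent Gaussian distortion, here $\bz^{\perp}$ is a highly nonlinear function of $\bv_0$ (hence of $\bV$), so its behaviour under $\bV^{\mathsf{H}}$ must be extracted through a careful conditional argument. The recursive characterization of Haar matrices is what makes this manageable: it exposes $\bv_0$ as the first column of an equivalent Haar unitary $\bV'$ and then draws the remaining columns Haar-randomly in $\bv_0^{\perp}$, decoupling the dependence between $\bV$ and the nonlinear $\bz^{\perp}(\bv_0)$. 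I expect formalising this decoupling, together with the joint CLT needed to combine the $\bU$- and $\bV$-fluctuations, to be the most technically demanding part of the proof.
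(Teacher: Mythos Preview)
Your proposal is correct and shares the paper's key insight---using the recursive characterization of Haar matrices to decouple the quantization distortion $\bz^{\perp}$ from the multiplying $\bV^{\mathsf{H}}$---but the execution differs in a meaningful way. The paper applies the Householder Dice construction \emph{symmetrically and iteratively} to both $\bU$ and $\bV$ (four applications of Lemma~\ref{Haar} in total), producing an \emph{exact, non-asymptotic} statistically equivalent model $\hat{\by}=T_s\,\bs+T_g\,\bg_2+\bn$ in which $T_s,T_g$ are explicit scalar functionals of independent Gaussian vectors $\bg_1,\bg_2,\bz_1,\bz_2$ and of $\bD$ (Proposition~\ref{Pro1}); the asymptotic step (Proposition~\ref{Pro2}) then reduces to showing $T_s\xrightarrow{a.s.}\overline{T}_s$ and $T_g\xrightarrow{a.s.}\overline{T}_g$, which is pure law-of-large-numbers plus Marchenko--Pastur. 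By contrast, you apply the recursive Haar step only once, to $\bV$, and treat $\bU$ by classical Haar-moment concentration together with a CLT over the independent QPSK symbols $\{s_l\}_{l\neq k}$.

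What each route buys: the paper's approach makes rigor almost mechanical---because $\bg_2$ is already an independent Gaussian vector in the exact equivalent model, no separate joint-CLT or asymptotic-independence argument is needed; the ``signal plus independent Gaussian'' structure is exact at finite $N$, and only the scalar coefficients need a limit. Your approach is lighter on HD machinery and arguably more transparent about where the two noise contributions ($\tfrac{2\gamma}{\pi}\mathrm{var}[df]/\mathbb{E}[f^2]$ from $\bU$ and $1-\tfrac{2}{\pi}$ from $\bV$) come from, but---as you correctly flag---the joint asymptotic Gaussianity and independence of the $\bU$-fluctuation and the $\bV$-distortion term must be argued directly, which is precisely the step the paper's two-stage (exact equivalent $\to$ limit) design sidesteps.
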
\vspace{-0.1cm}
Theorem \ref{The1} shows that the SEP of model \eqref{sysmodel} can be asymptotically characterized by the simple scalar model given in \eqref{eqn:asym}. It is well-known that the SEP of  \eqref{eqn:asym} 
can be tightly approximated as 
\begin{equation}\label{sep}\overline{\text{SEP}}\approx2Q\left(\sqrt{\overline{\text{SNR}}}\right),
\end{equation}
where $Q(x)=\frac{1}{\sqrt{2\pi}}\int_x^\infty e^{-\frac{1}{2}t^2}dt$ and $\overline{\text{SNR}}$ is the signal-to-noise-ratio (SNR) of \eqref{eqn:asym}, which is  given by \vspace{0.05cm}
\begin{equation}\label{snr}
\overline{\text{SNR}}=\frac{\overline{T}_s^2}{\overline{T}_g^2+\sigma^2}=\frac{\mathbb{E}^2[d\,f(d)]}{\text{var}[d\,f(d)]+\frac{1-\frac{2}{\pi}+\sigma^2}{\frac{2}{\pi}\gamma}\,\mathbb{E}[f^2(d)]}.\vspace{0.05cm}
\end{equation}
The above equations \eqref{sep} and \eqref{snr} clearly demonstrate the effect of the function $f(\cdot)$ on the SEP performance. In particular, by specifying $f(\cdot)$ as $f(x)=x$ or $f(x)=x^{-1}$, we can respectively obtain the following asymptotic SNR formulas for the one-bit MF and ZF precoders:\vspace{0.1cm}
\begin{equation*}
\overline{\text{SNR}}_{\text{MF}}= \frac{\frac{2}{\pi}\gamma}{1+\sigma^2},~~\overline{\text{SNR}}_{\text{ZF}}= \frac{\frac{2}{\pi}(\gamma-1)}{1-\frac{2}{\pi}+\sigma^2},
\end{equation*}
and the corresponding SEP formulas can be obtained by substituting the above expressions into \eqref{sep}. 
We remark here that an identical SEP formula for one-bit ZF precoding was derived in \cite{ZF} using the Bussgang decomposition technique ({\color{black}which involves a heuristic step to treat the  distortion term as a random variable  independent of the others; see the discussions in Section I}). Our Theorem \ref{The1} provides the SEP formula for a more general linear one-bit precoder using a {\color{black} rigorous} analytical technique as detailed in the next subsection. 
\vspace{-0.1cm}
\subsection{Proof Outline}\label{3b}
\vspace{-0.05cm}
In this subsection, we outline the proof of Theorem \ref{The1}. Our proof consists of two main steps. First, we derive a statistically equivalent model of \eqref{sysmodel} that is approximately in a signal-plus-independent-Gaussian-noise form and is more amenable to analysis; see Proposition \ref{Pro1} below. This step is non-asymptotic and the main technique is the use of Householder dice (HD) \cite{HD}, a novel approach for recursively generating Haar random matrices; see Lemma \ref{Haar} and the discussions that follows.
Second, we give the asymptotic analysis, which shows that, under the asymptotic assumption in Theorem 1, the statistically equivalent model converges to a simple signal-plus-independent-Gaussian-noise model. This statement will be made precise in Proposition \ref{Pro2}. Next we give more details on these two main steps.
\subsubsection{Statistically Equivalent Model}
In this part, we derive a statistically equivalent model for \eqref{sysmodel}. We begin with a well-known result in random matrix theory \cite{tulino2004random}, which characterizes the distributions of the random matrices appearing in \eqref{sysmodel}.\vspace{-0.15cm}
\begin{lemma}
Let $\bH=\bU\bD\bV^\mathsf{H}$ be the SVD of $\bH$, where the entries of $\bH$ are i.i.d. following $\mathcal{CN}(0,\frac{1}{N})$. Then $\bU,\bD, \bV$ are independent, and $\bU$ and $\bV$ are Haar distributed random matrices in $\mathcal{U}(K)$ and $\mathcal{U}(N)$, respectively, i.e., \vspace{0.05cm}
$$\bQ_1\bU\overset{d}{=}\bU\overset{d}{=}\bU\bQ_1,~\forall~ \bQ_1\in\mathcal{U}(K),$$
$$\bQ_2\bV\overset{d}{=}\bV\overset{d}{=}\bV\bQ_2,~\forall ~\bQ_2\in\mathcal{U}(N),\vspace{-0.15cm}$$
{\color{black}where $\overset{d}{=}$ means equal in distribution}.
\end{lemma}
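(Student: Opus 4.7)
The plan is to deduce the lemma from the bi-unitary invariance of the i.i.d.\ complex Gaussian distribution of $\bH$, which is the only nontrivial ingredient beyond measurability.

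First, I would observe that the joint density of $\bH$ is proportional to $\exp(-N\,\mathrm{tr}(\bH\bH^{\mathsf{H}}))$ and that the Lebesgue measure on $\C^{K\times N}$ has unit Jacobian under $\bH\mapsto \bQ_1 \bH \bQ_2$ for any fixed $\bQ_1\in\mathcal{U}(K)$, $\bQ_2\in\mathcal{U}(N)$. Since the Frobenius norm is preserved by such transformations, one immediately gets the bi-unitary invariance $\bQ_1\bH\bQ_2\overset{d}{=}\bH$.

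Second, I would fix a measurable SVD map $\bH\mapsto(\bU(\bH),\bD(\bH),\bV(\bH))$ by imposing a deterministic convention that resolves the phase/sign ambiguities (e.g., singular values in decreasing order plus a standard choice of phase on the leading nonzero entry of each singular vector). Applying this map to $\bQ_1\bH\bQ_2$ produces the SVD $(\bQ_1\bU(\bH),\,\bD(\bH),\,\bQ_2^{\mathsf{H}}\bV(\bH))$, so the invariance in Step 1, together with measurability of the SVD map and uniqueness under the convention, yields the joint identity
\[
(\bQ_1\bU,\,\bD,\,\bQ_2^{\mathsf{H}}\bV)\;\overset{d}{=}\;(\bU,\,\bD,\,\bV).
\]

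Third, I would specialize this identity to draw the stated conclusions. Taking $\bQ_2=\bI$ shows that, conditionally on $(\bD,\bV)$, the law of $\bU$ is invariant under arbitrary left multiplication by elements of $\mathcal{U}(K)$; this characterization forces the conditional law to be the Haar measure on $\mathcal{U}(K)$. Hence $\bU$ is Haar distributed and independent of $(\bD,\bV)$, and by bi-invariance of Haar measure we also get $\bU\overset{d}{=}\bU\bQ_1$. A symmetric argument with $\bQ_1=\bI$ handles $\bV$, and putting the two together yields the claimed independence of $\bU$, $\bD$, $\bV$.

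The only real obstacle is the non-uniqueness of the SVD: without a measurable, deterministic convention for selecting $(\bU,\bD,\bV)$, the chain-rule computation that sends $(\bU,\bD,\bV)$ of $\bH$ to $(\bQ_1\bU,\bD,\bQ_2^{\mathsf{H}}\bV)$ of $\bQ_1\bH\bQ_2$ is only correct up to a random diagonal phase in each singular-vector pair. Once the convention is fixed the phase ambiguity is absorbed and the distributional argument above is clean; all subsequent steps are standard consequences of the characterization of Haar measure by one-sided translation invariance on a compact group.
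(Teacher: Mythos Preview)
The paper does not actually prove this lemma; it is quoted as a well-known fact from random matrix theory with a citation to Tulino--Verd\'u. Your outline is built on the right idea---bi-unitary invariance of the i.i.d.\ complex Gaussian ensemble is indeed the only nontrivial input---but Step~2 contains a genuine gap that your final paragraph does not close.

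The issue is that a deterministic convention does not ``absorb'' the phase ambiguity; it is precisely what makes the ambiguity visible. Suppose your convention forces, say, the first nonzero entry of each column of $\bU(\bH)$ to be real positive. Then $(\bQ_1\bU(\bH),\bD(\bH),\bQ_2^{\mathsf{H}}\bV(\bH))$ is \emph{an} SVD of $\bQ_1\bH\bQ_2$, but in general $\bQ_1\bU(\bH)$ violates the convention, so applying your measurable map to $\bQ_1\bH\bQ_2$ returns $(\bQ_1\bU(\bH)\bE,\;\bD(\bH),\;\bQ_2^{\mathsf{H}}\bV(\bH)\bE)$ for a diagonal phase matrix $\bE$ that depends on both $\bQ_1$ and $\bU(\bH)$. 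The joint identity you claim in Step~2 therefore fails as stated. Moreover, under any deterministic phase convention $\bU(\bH)$ is supported on a lower-dimensional submanifold of $\mathcal{U}(K)$ (and only $K$ columns of $\bV$ are determined by $\bH$ at all, since $K<N$), so it cannot be Haar on the full unitary group; the conclusion of the lemma is literally false for the object you have constructed.

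The standard repair is to \emph{randomize} rather than fix the indeterminacies: take any measurable SVD, then right-multiply $\bU$ and the first $K$ columns of $\bV$ by an independent diagonal matrix of i.i.d.\ uniform phases, and complete $\bV$ to an $N\times N$ unitary using an independent $\text{Haar}(N-K)$ block. With this randomized selection the correction $\bE$ above is absorbed into the auxiliary randomness, your identity $(\bQ_1\bU,\bD,\bQ_2^{\mathsf{H}}\bV)\overset{d}{=}(\bU,\bD,\bV)$ holds exactly, and Step~3 then goes through verbatim.
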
\vspace{-0.2cm}
The above lemma suggests that all random matrices/vectors in model \eqref{sysmodel} are mutually independent, and $\bU$ and $\bV$ are Haar distributed.  
 To deal with these Haar random matrices, we apply the HD technique proposed in \cite{HD}.
 Before going into details, we first define a unitary reflector as in \cite[{\color{black}Eq. 27}]{HD}:  for any nonzero vector $\bv\in\mathbb{C}^{\color{black}m}$, define
$$\bR(\bv) = (-e^{-j\theta})\left(\bI_{\color{black}m}-\frac{\left(\frac{\bv}{\|\bv\|}+e^{j\theta}\mathbf{e}_1\right)\left(\frac{\bv}{\|\bv\|}+e^{j\theta}\mathbf{e}_1\right)^\mathsf{H}}{1+r}\right),$$
where $\frac{v_1}{\|\bv\|}=re^{j\theta}$ with $r\geq 0$ (when $v_1 = 0$, we set $\theta=0$) and $\mathbf{e}_1=[1,0,\dots,0]^\mathsf{T}$.  
It is easy to check that $\bR(\bv)$ is unitary and satisfies
\begin{equation}\label{propertyR}
\bR(\bv)\mathbf{e}_1=\frac{\bv}{\|\bv\|},~~\bR(\bv)^\mathsf{H}\bv=\|\bv\|\mathbf{e}_1.
\end{equation}
 In fact, $\bR(\bv)$ is a rotation {\color{black}of} the Householder transform of $\bv$. We also define the generalized reflector 
 $$\bR_k(\bv)=\left(\begin{matrix}\bI_{k-1}&\mathbf{0}\\\mathbf{0}&\bR(\bv[k:{\color{black}m}])\end{matrix}\right),~{\color{black}1\leq k\leq {\color{black}m}},$$
{\color{black}where $\bv[k:m]$ denotes the $k$-th to {\color{black}$m$}-th elements of $\bv$.}
 
Lemma \ref{Haar} below gives a recursive characterization of the Haar matrices and is the theoretical basis of the HD technique.  It is a direct generalization of \cite[Lemma 1]{HD} from the real space to the complex space. 

 \begin{lemma}\label{Haar}
Let $\mathbf{g}\sim\mathcal{C}\mathcal{N}(\mathbf{0},\mathbf{I}_{\color{black}m})$, $\bQ_{{\color{black}m}-1}\sim\text{\normalfont{Haar}}({\color{black}m}-1)$, and $\bv\in\C^n\backslash\{\mathbf{0}\}$, all of which are independent. Then
\begin{equation*}\label{construct}
\bQ_{{\color{black}m}}:=\bR_1(\mathbf{g})\left(\begin{matrix}1&\mathbf{0}\\\mathbf{0}&\bQ_{{\color{black}m}-1}\end{matrix}\right)\bR_1(\bv)^\mathsf{H}\sim\text{\normalfont{Haar}}({\color{black}m}),
\end{equation*}
\begin{equation*}\label{construct2}
\widetilde{\bQ}_{{\color{black}m}}:=\bR_1(\bv)\left(\begin{matrix}1&\mathbf{0}\\\mathbf{0}&\bQ_{{\color{black}m}-1}\end{matrix}\right)\bR_1(\bg)^\mathsf{H}\sim\text{\normalfont{Haar}}({\color{black}m}),
\end{equation*}
where $\text{\normalfont{Haar}}({\color{black}m})$ denotes the Haar distribution in $\mathcal{U}({\color{black}m})$. Moreover, $\bQ_{\color{black}m}$ and $\widetilde{\bQ}_{\color{black}m}$ 
are independent of $\bv$.
 \end{lemma}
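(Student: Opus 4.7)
The plan is to reduce everything to two facts: (i) the Haar measure on $\mathcal{U}(m)$ is the unique probability measure on $\mathcal{U}(m)$ that is invariant under both left and right multiplication by deterministic unitaries; and (ii) if $\bg\sim\mathcal{CN}(\mathbf{0},\bI_m)$, then $\bg/\|\bg\|$ is uniform on the complex unit sphere $S^{2m-1}$. I would prove the identity for $\bQ_m$ in detail; the identity for $\widetilde{\bQ}_m$ then follows by Hermitian conjugation, and the independence from $\bv$ drops out via a conditioning argument.

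First I would establish the intermediate claim
\[
\bM \;:=\; \bR_1(\bg)\begin{pmatrix}1 & \mathbf{0} \\ \mathbf{0} & \bQ_{m-1}\end{pmatrix}\sim\text{Haar}(m),
\]
by verifying it column by column. Using \eqref{propertyR} we have $\bR_1(\bg)\mathbf{e}_1=\bg/\|\bg\|$, which by (ii) is uniform on $S^{2m-1}$. Conditioning on $\bg$, the remaining columns $\bM\mathbf{e}_2,\ldots,\bM\mathbf{e}_m$ are the images under the unitary $\bR_1(\bg)$ of the last $m-1$ columns of $\text{diag}(1,\bQ_{m-1})$; since $\bQ_{m-1}$ is Haar on $\mathcal{U}(m-1)$ and independent of $\bg$, these columns form a Haar-distributed orthonormal basis of $\{\bg/\|\bg\|\}^{\perp}$. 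This is the standard column-by-column construction of a Haar matrix and yields $\bM\sim\text{Haar}(m)$.

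Next, I would absorb the factor $\bR_1(\bv)^{\mathsf{H}}$ using right-invariance. Since $\bv$ is independent of $(\bg,\bQ_{m-1})$, conditioning on $\bv$ leaves $\bM$ Haar-distributed; and for any deterministic unitary $\bU_0$, $\bM\bU_0\sim\text{Haar}(m)$. Taking $\bU_0=\bR_1(\bv)^{\mathsf{H}}$ gives $\bQ_m\mid\bv\sim\text{Haar}(m)$; as the conditional distribution does not depend on $\bv$, we conclude that $\bQ_m\sim\text{Haar}(m)$ and is independent of $\bv$. For the second identity I would note that
\[
\bQ_m^{\mathsf{H}} \;=\; \bR_1(\bv)\begin{pmatrix}1 & \mathbf{0} \\ \mathbf{0} & \bQ_{m-1}^{\mathsf{H}}\end{pmatrix}\bR_1(\bg)^{\mathsf{H}},
\]
and since $\bQ_{m-1}^{\mathsf{H}}\overset{d}{=}\bQ_{m-1}$ (Haar is closed under Hermitian conjugation), the right-hand side has the same joint distribution with $\bv$ as $\widetilde{\bQ}_m$. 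Because $\bQ_m^{\mathsf{H}}\sim\text{Haar}(m)$ and is independent of $\bv$, so is $\widetilde{\bQ}_m$.

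The main obstacle, and the point where the real-case proof in \cite[Lemma~1]{HD} must genuinely be adapted, is the handling of the $U(1)$ phase degree of freedom that is absent in the real case. This shows up in the $(-e^{-j\theta})$ prefactor and in the $e^{j\theta}\mathbf{e}_1$ correction inside $\bR(\bv)$: one must check carefully that these make $\bR(\bv)$ well-defined (including the $v_1=0$ convention) and that the identity $\bR(\bg)\mathbf{e}_1=\bg/\|\bg\|$ and the unitary-invariance claims remain intact. A closely related subtlety is that the argument requires the \emph{unitary} invariance of $\mathcal{CN}(\mathbf{0},\bI_m)$ (so that $\bg/\|\bg\|$ is uniform on the complex sphere $S^{2m-1}$, and not merely on the real sphere of twice the dimension); this is what makes the first column of $\bM$ uniformly distributed with the correct invariance and is ultimately what drives the whole argument.
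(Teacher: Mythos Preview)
The paper does not itself prove this lemma; it merely states it as ``a direct generalization of \cite[Lemma~1]{HD} from the real space to the complex space'' and moves on. Your argument---building $\bM=\bR_1(\bg)\,\mathrm{diag}(1,\bQ_{m-1})$ column-by-column (first column $\bg/\|\bg\|$ uniform on the complex sphere, remaining columns a Haar frame of its orthogonal complement), then absorbing $\bR_1(\bv)^{\mathsf{H}}$ by right-invariance after conditioning on $\bv$ to obtain both the Haar law and independence from $\bv$, and finally handling $\widetilde{\bQ}_m$ via Hermitian conjugation together with $\bQ_{m-1}^{\mathsf{H}}\overset{d}{=}\bQ_{m-1}$---is correct and is exactly the standard argument underlying the cited reference, so your proposal supplies precisely the details the paper omits.
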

 
 Now we are ready to apply the HD technique to our model \eqref{sysmodel}. The HD technique was originally proposed as an efficient algorithm for performing an iterative process involving large Haar matrices \cite{HD}. Here we use it as a powerful tool for analysis. The main idea is to recursively generate the Haar matrices $\bU$ and $\bV$ in \eqref{sysmodel}, using Lemma \ref{Haar}, in such a way that the resulting model is  amenable to analysis. 

To begin, we rewrite \eqref{sysmodel} as the following iterative form:
 \begin{equation*}\label{Eqn:model_s1s3}
\begin{aligned}
\bs_1&=f(\bD)^\mathsf{T}\bU^\mathsf{H}\bs,~~~\bs_2=q(\bV \bs_1),\\
\bs_3&=\bD\bV^\mathsf{H}\bs_2,\qquad~~ \hspace{0.05cm}\by=\bU\bs_3+\bn.
\end{aligned}
\end{equation*}
We first construct a Haar matrix $\bU^{(1)}$ according to Lemma \ref{Haar}:
\begin{equation*}\label{U}
\bU^{(1)}=\bR_1(\bs)\left(\begin{matrix}1&\mathbf{0}\\\mathbf{0}&\bQ_{K-1}\end{matrix}\right)\bR_1(\mathbf{g}_1)^\mathsf{H}, 
\end{equation*}
where $\bg_1\sim\mathcal{C}\mathcal{N}(\mathbf{0},\mathbf{I}_K)$ and $\bQ_{K-1}\sim\text{Haar}(K-1)$ are  independent and independent of $\bs$, $\bD$, and $\bn$. Then we have 
$$
\begin{aligned}
\tilde{\bs}_1&=f(\bD)^\mathsf{T}\left(\bU^{(1)}\right)^\mathsf{H}\bs\\
&=f(\bD)^\mathsf{T}\bR_1(\bg_1)\left(\begin{matrix}1&\mathbf{0}\\\mathbf{0}&\bQ_{K-1}^\mathsf{H}\end{matrix}\right)\bR_1(\bs)^\mathsf{H}\bs\\
&=f(\bD)^\mathsf{T}\bR_1(\bg_1)\bR_1(\bs)^\mathsf{H}\bs,
\end{aligned}$$
where the last equality holds since $\bR_1(\bs)^\mathsf{H}\bs$ {\color{black} is only non-zero in its first element due to \eqref{propertyR}}. At the second iteration, we generate the  Haar matrix $\bV^{(1)}$ according to Lemma \ref{Haar} as 
\begin{equation*}\label{eq:V1}
\bV^{(1)}=\bR_1(\bz_1)\left(\begin{matrix}1&\mathbf{0}\\\mathbf{0}&\bP_{N-1}\end{matrix}\right)\bR_1(\tilde{\bs}_1)^\mathsf{H},
\end{equation*}
where $\bz_1\sim\mathcal{C}\mathcal{N}(\mathbf{0},\mathbf{I}_N)$ and $\bP_{N-1}\sim\text{Haar}(N-1)$ are  independent and independent of all the existing random variables. 
It follows that $
\tilde{\bs}_2=q(\bV^{(1)}\tilde{\bs}_1){=}q(\bR_1(\bz_1)\bR_1(\tilde{\bs}_1)^\mathsf{H}\tilde{\bs}_1).
$  
At the third iteration, we define  $\bv_1=\bR_1(\bz_1)^\mathsf{H}\tilde{\bs}_2$ and further construct $\bP_{N-1}$ in $\bV^{(1)}$ as 
$$\bP_{N-1}=\bR_1(\bv_1[2:N])\left(\begin{matrix}1&\mathbf{0}\\\mathbf{0}&\bP_{N-2}\end{matrix}\right)\bR_1(\bz_2[2:N])^\mathsf{H},$$ where 
 $\bz_2\sim\mathcal{C}\mathcal{N}(\mathbf{0},\mathbf{I}_N)$ and $\bP_{N-2}\sim\text{Haar}(N-2)$ are independent and independent of all the existing random variables. Then we have 
\begin{equation*}\label{eqn:V2}
\bV^{(2)}=\bR_1(\bz_1)\bR_2(\bv_1)\left(\begin{matrix}\mathbf{I}_2&\mathbf{0}\\\mathbf{0}&\bP_{N-2}\end{matrix}\right)\bR_2(\bz_2)^\mathsf{H}\bR_1(\tilde{\bs}_1)^\mathsf{H}
\end{equation*}
 and 
 $$
 \begin{aligned}
 \tilde{\bs}_3&=\bD \left(\bV^{(2)}\right) ^\mathsf{H}\tilde{\bs}_2=\bD \bR_1(\tilde{\bs}_1)\bR_2(\bz_2)\bR_2(\bv_1)^\mathsf{H}\bv_1. 
 \end{aligned}
 $$
 Finally,  let $\bv_2=\bR_1(\bg_1)^\mathsf{H}\tilde{\bs}_3$ and construct $\bQ_{K-1}$ in $\bU^{(1)}$ as
 $$\bQ_{K-1}=\bR_1(\bg_2[2:K])\left(\begin{matrix}1&\mathbf{0}\\\mathbf{0}&\bQ_{K-2}\end{matrix}\right)\bR_1(\bv_2[2:K])^\mathsf{H},$$ 
where $\bg_2\sim\mathcal{C}\mathcal{N}(\mathbf{0},\mathbf{I}_K)$ and  $\bQ_{K-2}\sim\text{Haar}(K-2)$ are independent and independent of all the existing random variables.  Then we have 
\begin{equation*}\label{eqn:U2}\bU^{(2)}=\bR_1(\bs)\bR_2(\bg_2)\left(\begin{matrix}\mathbf{I}_2&\mathbf{0}\\\mathbf{0}&\bQ_{K-2}\end{matrix}\right)\bR_2(\bv_2)^\mathsf{H}\bR_1(\bg_1)^\mathsf{H}
\end{equation*} and 
\begin{equation}\label{tildey}
\begin{aligned}
\tilde{\by}&=\bU^{(2)}\tilde{\bs}_3+\bn=\bR_1(\bs)\bR_2(\bg_2)\bR_2(\bv_2)^\mathsf{H}\bv_2+\bn.
\end{aligned}
\end{equation}

So far, we have obtained a new model \eqref{tildey} using the HD technique. Similar to \cite{HD}, we can prove that $(\tilde{\by},\bs)$ in \eqref{tildey} and $(\by,\bs)$ in \eqref{sysmodel} are statistically equivalent.  {\color{black}Using the properties of $\bR(\cdot)$ in \eqref{propertyR}, we further get the following statistically equivalent model of \eqref{tildey}}, whose proof is omitted due to the limited space.
 \begin{proposition}[Statistically Equivalent Model]\label{Pro1}
The distribution of $(\by,\bs)$ in \eqref{sysmodel} is the same as $(\hat{\by},\bs)$ specified by the following model:\vspace{-0.05cm}\begin{equation}\label{Equiy}
\begin{aligned}
 \hat{\by}= {T}_s\,\bs+ T_g\, \bg_2+\bn,
 \end{aligned}
\end{equation}
where
\begin{equation*}\label{Equiy_2}
\begin{aligned}
T_s\hspace{-0.05cm}=&\frac{\bg_1^\mathsf{H}\{C_1\,\bD\tilde{\bs}_1\hspace{-0.05cm}+\hspace{-0.05cm}C_2\,\bD \bB(\tilde{\bs}_1)\bz_2[2:N]\}}{\|\bg_1\|\|\bs\|}\hspace{-0.05cm}-\hspace{-0.05cm}T_g\,\frac{(\bR(\bs)^{-\hspace{-0.05cm}1}\bg_2)[1]}{\|\bs\|},\\
T_g\hspace{-0.05cm}=&\frac{\|\bB(\bg_1)^\mathsf{H}\{C_1\bD \tilde{\bs}_1+C_2\bD \bB(\tilde{\bs}_1)\bz_2[2:N]\}\|}{\|(\bR(\bs)^{-1}\bg_2)[2:K]\|},\\
C_1=&\frac{\bz_1^\mathsf{H}q\left(\bz_1\right)}{\|\tilde{\bs}_1\|\|\bz_1\|},~ C_2\hspace{-0.02cm}=\hspace{-0.05cm}\frac{\left\|\bB(\bz_1)^\mathsf{H}q\left(\bz_1\right)\right\|}{\|\bz_2[2:N]\|}, ~\tilde{\bs}_1\hspace{-0.05cm}=\hspace{-0.05cm}\frac{\|\bs\|}{\|\bg_1\|} f(\bD)^{\mathsf{T}}\bg_1.
 \end{aligned}
 \end{equation*} 
 In the above expressions, $\bg_1, \bg_2, \bz_1,\bz_2$ are independent standard Gaussian random vectors, which are further independent of $\{\bs,\bD,\bn\}$;  $\bB(\cdot)$ represents the submatrix of $\bR(\cdot)$ with the first column removed.\vspace{-0.05cm}
\end{proposition}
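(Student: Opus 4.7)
The plan is to pick up where the excerpt leaves off with \eqref{tildey}, which by the HD construction already satisfies $(\by,\bs)\overset{d}{=}(\tilde{\by},\bs)$, and to reduce this further to $(\hat{\by},\bs)$ as in \eqref{Equiy}. The entire reduction is carried out by repeatedly applying the reflector identities \eqref{propertyR} to peel off the outer layers of nested $\bR(\cdot)$'s, followed by a single distributional substitution on $\bg_2$ to arrive at the advertised signal-plus-Gaussian form.

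Starting from \eqref{tildey}, I would simplify $\bR_2(\bv_2)^\mathsf{H}\bv_2$ using $\bR(\bv)^\mathsf{H}\bv=\|\bv\|\mathbf{e}_1$ to obtain $v_2[1]\mathbf{e}_1+\|\bv_2[2:K]\|\mathbf{e}_2$; then apply $\bR_2(\bg_2)$ using $\bR(\bv)\mathbf{e}_1=\bv/\|\bv\|$ to bring in $\bg_2[2:K]/\|\bg_2[2:K]\|$; and finally apply $\bR(\bs)$, which sends $\mathbf{e}_1\mapsto\bs/\|\bs\|$ and acts as $\bB(\bs)$ on the orthogonal block. This yields the intermediate form
\[
\tilde{\by}=\frac{v_2[1]}{\|\bs\|}\bs+\frac{\|\bv_2[2:K]\|}{\|\bg_2[2:K]\|}\bB(\bs)\bg_2[2:K]+\bn.
\]
The same peeling applied to the inner layers is routine: because $\bR_1(\tilde{\bs}_1)^\mathsf{H}\tilde{\bs}_1=\|\tilde{\bs}_1\|\mathbf{e}_1$ and $q(\cdot)$ is invariant under positive real scaling, $\tilde{\bs}_2=q(\bV^{(1)}\tilde{\bs}_1)$ collapses to $q(\bz_1)$, so $\bv_1=\bR(\bz_1)^\mathsf{H}q(\bz_1)$, which immediately gives $v_1[1]=C_1\|\tilde{\bs}_1\|$ and $\|\bv_1[2:N]\|=C_2\|\bz_2[2:N]\|$. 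Iterating the same trick through $\tilde{\bs}_3=\bD\bR(\tilde{\bs}_1)\bR_2(\bz_2)\bR_2(\bv_1)^\mathsf{H}\bv_1$ produces $\tilde{\bs}_3=\bD\{C_1\tilde{\bs}_1+C_2\bB(\tilde{\bs}_1)\bz_2[2:N]\}$, so that $v_2[1]=\bg_1^\mathsf{H}\tilde{\bs}_3/\|\bg_1\|$ and $\|\bv_2[2:K]\|=\|\bB(\bg_1)^\mathsf{H}\tilde{\bs}_3\|$ recover exactly the numerators appearing in the stated $T_s$ and $T_g$.

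The remaining step is to convert $\bg_2[2:K]$ into the full $\bg_2$ that appears in \eqref{Equiy}. Conditional on $\bs$, the vector $\bR(\bs)^\mathsf{H}\bg_2$ is $\mathcal{CN}(\mathbf{0},\bI_K)$ and independent of $(\bg_1,\bz_1,\bz_2,\bD,\bn)$, so it has the same joint law as $\bg_2$. Replacing $\bg_2$ by $\bR(\bs)^\mathsf{H}\bg_2$ in the intermediate display is therefore a distribution-preserving substitution; it sends $\bg_2[2:K]\mapsto\bB(\bs)^\mathsf{H}\bg_2$ and consequently $\bB(\bs)\bg_2[2:K]\mapsto\bB(\bs)\bB(\bs)^\mathsf{H}\bg_2=\bg_2-(\bs^\mathsf{H}\bg_2/\|\bs\|^2)\bs$. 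Regrouping the $\bs$ and $\bg_2$ contributions, and using $(\bR(\bs)^{-1}\bg_2)[1]=\bs^\mathsf{H}\bg_2/\|\bs\|$ together with $\|(\bR(\bs)^{-1}\bg_2)[2:K]\|=\|\bB(\bs)^\mathsf{H}\bg_2\|$, reproduces the precise expressions for $T_s$ and $T_g$ in the proposition.

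The main obstacle I expect is the final distributional substitution: one must carefully verify that $\bR(\bs)^\mathsf{H}\bg_2$, viewed jointly with every other random object in the system (including $\bs$ itself), is equal in law to $\bg_2$. This is the point where the unitary invariance of the complex Gaussian must be combined with the independence of $\bg_2$ from $\bs$ and from the deeper-layer Gaussians $\bg_1,\bz_1,\bz_2$; mishandling the dependence on $\bs$ here would break the chain of distributional equalities. Beyond this, the rest is essentially bookkeeping: checking that the deferred Haar blocks $\bQ_{K-2}$ and $\bP_{N-2}$ act trivially because the vectors they encounter are supported in their first two coordinates, and keeping track of the many nested reflectors introduced by the HD iterations.
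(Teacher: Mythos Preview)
Your proposal is correct and follows precisely the route the paper indicates: start from \eqref{tildey} (already known to satisfy $(\tilde{\by},\bs)\overset{d}{=}(\by,\bs)$ by the HD construction), peel off the nested reflectors using \eqref{propertyR}, and perform the single unitary-Gaussian substitution $\bg_2\mapsto\bR(\bs)^\mathsf{H}\bg_2$ to regroup into the advertised $T_s\bs+T_g\bg_2+\bn$ form. The paper omits these computations ``due to the limited space'' but points to exactly this strategy, and your bookkeeping of $\tilde{\bs}_1,\tilde{\bs}_2,\tilde{\bs}_3,\bv_1,\bv_2$ and the identification of $C_1,C_2,T_s,T_g$ matches the statement line by line; the dependence check you flag for the substitution step is indeed the only nontrivial point and is handled by the rotational invariance of $\bg_2$ together with its independence from $\{\bs,\bg_1,\bz_1,\bz_2,\bD,\bn\}$.
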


\subsubsection{Asymptotic Analysis}
Although the statistically equivalent model in \eqref{Equiy} seems to admit a simple structure, $T_s$, $T_g$ and $\bs$, $\bg_2$ are correlated in a  complicated fashion. This motivates us to further consider the large system limit. In this part, we  show that under the asymptotic assumption where $N,K\to\infty$ with $\frac{N}{K}=\gamma>1$, both $T_s$ and $T_g$ converge to deterministic quantities. 

Our main asymptotic result is summarized {\color{black}as follows}. \vspace{-0.05cm}
\begin{proposition}[Asymptotic Analysis]\label{Pro2}
Let 
\begin{equation}\label{Asympmodel}
\bar{\by}:=\overline{T}_s\,\bs+\overline{T}_g\,\bg_2+\bn,
\end{equation}
where $\overline{T}_s$ and $\overline{T}_g$ are given in \eqref{TsTg}. 
Then the following holds as $N, K\to\infty$ with $\frac{N}{K}=\gamma>1$:
\[
(\hat{y}_k,s_k)\xrightarrow{a.s.}(\bar{y}_k,s_k),\quad \forall\, k\in[K],
\]
where $(\hat{y}_k,s_k)$ and $(\bar{y}_k,s_k)$  are {\color{black} the $k$-th rows of $(\hat{\by}, \bs)$ and $(\bar{\by},\bs)$} given in \eqref{Equiy} and \eqref{Asympmodel}, respectively.

\end{proposition}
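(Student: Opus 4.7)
The plan is to reduce the proposition to the two scalar limits $T_s\xrightarrow{a.s.}\overline{T}_s$ and $T_g\xrightarrow{a.s.}\overline{T}_g$. Once these hold, writing $\hat{y}_k-\bar{y}_k=(T_s-\overline{T}_s)s_k+(T_g-\overline{T}_g)g_{2,k}$, with $|s_k|=1$ (QPSK) and $g_{2,k}\sim\mathcal{CN}(0,1)$ almost surely finite and independent of the growing dimensions, yields $(\hat{y}_k,s_k)\xrightarrow{a.s.}(\bar{y}_k,s_k)$ at once. I would attack the two scalar limits in three stages.

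First, I would pin down the almost-sure limits of a short list of ``building-block'' scalars, using the strong law of large numbers (SLLN) for i.i.d.\ Gaussian sums and the Marchenko--Pastur (MP) theorem for empirical averages of the $d_i$. SLLN gives $\|\bg_1\|^2/K$, $\|\bs\|^2/K$, $\|(\bR(\bs)^{-1}\bg_2)[2{:}K]\|^2/K$, $\|\bz_1\|^2/N$ and $\|\bz_2[2{:}N]\|^2/N$ all tending to $1$; coupling SLLN with MP yields $\|\tilde{\bs}_1\|^2/K\to\mathbb{E}[f^2(d)]$, $\|\bD\tilde{\bs}_1\|^2/K\to\mathbb{E}[d^2f^2(d)]$ and $\bg_1^\mathsf{H}\bD\tilde{\bs}_1/K\to\mathbb{E}[df(d)]$. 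A direct computation over a $\mathcal{CN}(0,1)$ variable gives $\mathbb{E}[z^{\ast}q(z)]=\sqrt{2/\pi}$, and combined with $\|q(\bz_1)\|^2/N\to 1$ and the Pythagorean identity $\|q(\bz_1)\|^2=|\bz_1^\mathsf{H} q(\bz_1)|^2/\|\bz_1\|^2+\|\bB(\bz_1)^\mathsf{H} q(\bz_1)\|^2$ this produces $C_1\to\sqrt{2\gamma/(\pi\mathbb{E}[f^2(d)])}$ and $C_2\to\sqrt{1-2/\pi}$.

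Next, I would deduce $T_s\to\overline{T}_s$: the subtracted term $T_g(\bR(\bs)^{-1}\bg_2)[1]/\|\bs\|$ is $O(1)/\Theta(\sqrt{K})=o(1)$, while in the main numerator the part containing $\bz_2[2{:}N]$ is, conditionally on the rest, a mean-zero Gaussian scalar of variance $O(K)$ and hence $O(\sqrt{K})$, absorbed by the $\Theta(K)$ denominator $\|\bg_1\|\|\bs\|$; the surviving piece $C_1\bg_1^\mathsf{H}\bD\tilde{\bs}_1/(\|\bg_1\|\|\bs\|)$ tends to $\sqrt{2\gamma/(\pi\mathbb{E}[f^2(d)])}\,\mathbb{E}[df(d)]=\overline{T}_s$. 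For $T_g$, setting $\bw:=C_1\bD\tilde{\bs}_1+C_2\bD\bB(\tilde{\bs}_1)\bz_2[2{:}N]$ and using $\|\bB(\bg_1)^\mathsf{H}\bw\|^2=\|\bw\|^2-|\bg_1^\mathsf{H}\bw|^2/\|\bg_1\|^2$, I would condition on $(\bD,\bg_1,\bz_1,\bs)$ so that $\bz_2[2{:}N]$ is standard Gaussian and apply Hanson--Wright to the quadratic form, with the identity $\bB(\tilde{\bs}_1)\bB(\tilde{\bs}_1)^\mathsf{H}=\bI_N-\tilde{\bs}_1\tilde{\bs}_1^\mathsf{H}/\|\tilde{\bs}_1\|^2$ giving $\|\bD\bB(\tilde{\bs}_1)\bz_2[2{:}N]\|^2/K\to\mathbb{E}[d^2]=1$ (the subtracted correction $\|\bD\tilde{\bs}_1\|^2/\|\tilde{\bs}_1\|^2$ is only $O(1)$) and the cross term in $\|\bw\|^2$ again $O(\sqrt{K})$. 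A parallel calculation gives $|\bg_1^\mathsf{H}\bw|^2/(K\|\bg_1\|^2)\to C_1^2\mathbb{E}[df(d)]^2$, so $\|\bB(\bg_1)^\mathsf{H}\bw\|^2/K\to C_1^2\,\text{var}[df(d)]+C_2^2$, which divided by the denominator (whose squared norm also goes to $K$) recovers $\overline{T}_g^2$ from \eqref{TsTg} exactly.

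The main obstacle will be the bookkeeping between the conditional Gaussian concentration and the MP spectral averages: the quadratic forms in $\bz_2[2{:}N]$ and $\bz_1$ must be handled by Hanson--Wright plus Borel--Cantelli to pass from conditional to unconditional almost-sure convergence, and one must correctly identify which ``lower-order'' corrections (e.g.\ $\|\bD\tilde{\bs}_1\|^2/\|\tilde{\bs}_1\|^2$) are only $O(1)$ and hence genuinely negligible at scale $K$; getting this accounting wrong changes the ZF limit by a factor of $\gamma$ and is the subtle step worth flagging.
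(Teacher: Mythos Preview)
Your plan is correct and matches the paper's own proof sketch: both reduce the claim to the two scalar limits $T_s\xrightarrow{a.s.}\overline{T}_s$ and $T_g\xrightarrow{a.s.}\overline{T}_g$, and both obtain these by combining the strong law of large numbers with the Marchenko--Pastur law for empirical spectral averages of $\bD$. The paper's sketch compresses all the ``remaining terms'' into the two templates $\bg_1^\mathsf{H} f_1(\bD)f_2(\bD)^\mathsf{T}\bg_1/K$ and $\bg_1^\mathsf{H} f_1(\bD)f_2(\bD)^\mathsf{T}\bg_2/K$, whereas you unpack each piece of $T_s$ and $T_g$ explicitly and invoke Hanson--Wright plus Borel--Cantelli for the quadratic forms in $\bz_2[2{:}N]$; this is a legitimate (and slightly more rigorous) way to carry out exactly the step the paper summarizes as ``from the law of large numbers and the Marchenko--Pastur law.''
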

\hspace{-0.34cm}\textit{Proof Sketch:~} It suffices to prove that 
$T_s\xrightarrow{a.s.}\overline{T}_s$ and $T_g\xrightarrow{a.s.}\overline{T}_g.$ 
First, according to the law of large numbers, we have 
$$\frac{\|\bg_i\|^2}{K}\xrightarrow{a.s.}1,~\frac{\|\bz_i\|^2}{N}\xrightarrow{a.s.}1,~i=1,2,~\frac{\bz_1^\mathsf{H}q(\bz_1)}{N}\xrightarrow{a.s.}\sqrt{\frac{2}{\pi}}.$$
By further noting that $\|\bs\|=\sqrt{K}$ and  using the relation $\bR(\bv)=\left(\frac{\bv}{\|\bv\|},\quad \bB(\bv)\right)$ for all $\bv\neq \mathbf{0},$ one can show that the remaining terms in $T_s$ and $T_g$ to be analyzed all have the following forms:
$$\frac{\bg_1^\mathsf{H}f_1(\bD)f_2(\bD)^\mathsf{T}\bg_1}{K}~\text{  and  }~ \frac{\bg_1^\mathsf{H}f_1(\bD)f_2(\bD)^\mathsf{T}\bg_2}{K},$$
where $f_1(\cdot)$ and $f_2(\cdot)$ are some positive continuous functions. From the law of large numbers and the Marchenko-Pastur law \cite{tulino2004random}, we can prove that 
$$\frac{\bg_1^\mathsf{H}f_1(\bD)f_2(\bD)^\mathsf{T}\bg_1}{K}\xrightarrow{a.s.} \mathbb{E}[f_1(d)f_2(d)]$$
and 
$$\frac{\bg_1^\mathsf{H}f_1(\bD)f_2(\bD)^\mathsf{T}\bg_2}{K}\xrightarrow{a.s.} 0,$$
where $d$ is defined as in Theorem \ref{The1}. This completes the proof.\qed\\

\vspace{-0.2cm}
Using Proposition \ref{Pro2} and noting that the SEP of \eqref{Asympmodel} is the same as that of the scalar asymptotic model \eqref{eqn:asym} in Theorem \ref{The1}, we can finally show that $\text{SEP}_k\rightarrow\overline{\text{SEP}}$, which gives Theorem \ref{The1}. The proof of the SEP convergence is straightforward and is omitted here.
\section{Optimal Linear One-Bit Precoding}\label{optimization}
As shown in Section \ref{analysis}, the function $f(\cdot)$ involved in defining $\bP$ has a major impact on the system SEP performance. In this section, we derive the optimal $f(\cdot)$ based on the asymptotic analysis in Theorem \ref{The1}. 
Specifically, our goal is to find $f(\cdot)$ that minimizes $\overline{\text{SEP}}$. 
It is simple to check that minimizing $\overline{\text{SEP}}$ in \eqref{sep} is equivalent to maximizing $\overline{\text{SNR}}$ in \eqref{snr}. Hence, we only need to focus on the following $\overline{\text{SNR}}$ maximization problem:
\begin{equation}\label{snrmax}
\max_{f:\mathbb{R}^{++}\rightarrow\mathbb{R}^{++}}~\frac{\mathbb{E}^2[d\,f(d)]}{\text{var}[d\,f(d)]+\frac{1-\frac{2}{\pi}+\sigma^2}{\frac{2}{\pi}\gamma}\,\mathbb{E}[f^2(d)]}.
\end{equation}
Somewhat surprisingly, the solution to the above infinite dimensional functional optimization problem has a simple structure, as shown in the following theorem.
\vspace{-0.2cm}
\begin{theorem}\label{the:optf}
The optimal solution to problem \eqref{snrmax} is a class of functions of the following form:
$$f^*(x)=\frac{\sqrt{\alpha} x}{x^2+\rho^*},\quad \text{where }\alpha>0\text{ and }\rho^*=\frac{1-\frac{2}{\pi}+\sigma^2}{\frac{2}{\pi}\gamma}.$$
\end{theorem}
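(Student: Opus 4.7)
My plan is to recast the SNR maximization as a minimization amenable to the Cauchy-Schwarz inequality, and then read off the equality condition to identify $f^*$.

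First, I would exploit the scale invariance of the objective in \eqref{snrmax}: replacing $f$ by $cf$ for any $c>0$ leaves the ratio unchanged, which already explains why the optimal family is parametrized by $\alpha>0$. This lets me fix a convenient normalization later if needed, but more importantly it tells me I should look for a condition that pins down $f^*$ only up to a positive multiplicative constant.

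Next, I would rewrite the reciprocal of the SNR in a single-expectation form. Setting $u(d):=d\,f(d)$, I have
\begin{equation*}
\text{var}[d\,f(d)]+\rho^*\,\mathbb{E}[f^2(d)]=\mathbb{E}\!\left[f^2(d)(d^2+\rho^*)\right]-\mathbb{E}^2[d\,f(d)],
\end{equation*}
so that
\begin{equation*}
\frac{1}{\overline{\text{SNR}}}+1=\frac{\mathbb{E}\!\left[f^2(d)(d^2+\rho^*)\right]}{\mathbb{E}^2[d\,f(d)]}.
\end{equation*}
Hence maximizing $\overline{\text{SNR}}$ is equivalent to minimizing the right-hand side above, a ratio of a second moment to a squared first moment. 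This is the form I want to attack with Cauchy-Schwarz.

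The key step is the Cauchy-Schwarz bound
\begin{equation*}
\mathbb{E}^2[d\,f(d)]=\mathbb{E}^2\!\left[\,f(d)\sqrt{d^2+\rho^*}\cdot\tfrac{d}{\sqrt{d^2+\rho^*}}\,\right]\leq \mathbb{E}\!\left[f^2(d)(d^2+\rho^*)\right]\cdot\mathbb{E}\!\left[\tfrac{d^2}{d^2+\rho^*}\right],
\end{equation*}
which gives the universal upper bound $\overline{\text{SNR}}\leq \mathbb{E}\!\left[d^2/(d^2+\rho^*)\right]\big/\bigl(1-\mathbb{E}\!\left[d^2/(d^2+\rho^*)\right]\bigr)$, independent of $f$. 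Equality in Cauchy-Schwarz holds iff the two factors are proportional, i.e., $f(d)\sqrt{d^2+\rho^*}=c\cdot d/\sqrt{d^2+\rho^*}$ almost surely for some constant $c>0$, which rearranges exactly to $f(d)=cd/(d^2+\rho^*)$, matching the stated $f^*$ with $c=\sqrt{\alpha}$.

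To close the argument I would verify that $f^*(x)=\sqrt{\alpha}\,x/(x^2+\rho^*)$ is admissible: it maps $\mathbb{R}^{++}\to\mathbb{R}^{++}$ and is continuous, and since the Marchenko-Pastur support $[(1-\sqrt{c})^2,(1+\sqrt{c})^2]$ is bounded away from $0$ (because $\gamma>1$ implies $c<1$), all expectations appearing in \eqref{snrmax} are finite. I do not expect any serious obstacle here; the main conceptual hurdle is recognizing the right rewriting in the second step so that Cauchy-Schwarz delivers both the optimal value and the equality condition in one stroke.
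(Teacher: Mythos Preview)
Your proposal is correct and follows essentially the same route as the paper: rewrite the SNR maximization as the minimization of $\mathbb{E}[(d^2+\rho^*)f^2(d)]/\mathbb{E}^2[d\,f(d)]$, apply Cauchy--Schwarz with the factorization $d\,f(d)=\bigl(f(d)\sqrt{d^2+\rho^*}\bigr)\cdot\bigl(d/\sqrt{d^2+\rho^*}\bigr)$, and read off the equality condition. Your added remarks on scale invariance and admissibility of $f^*$ are nice touches but do not change the underlying argument.
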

\begin{proof}
Note that problem \eqref{snrmax} can be written in an equivalent form as
\begin{equation}\label{prob2}
\begin{aligned}
\min_{f:\mathbb{R}^{++}\rightarrow\mathbb{R}^{++}}~&\frac{\mathbb{E}\left[\left(d^2+\rho^*\right) f^2(d)\right]}{\mathbb{E}^2[d\, f(d)]}.
\end{aligned}
\end{equation}
Using the Cauchy-Schwarz inequality $\mathbb{E}^2[XY]\leq \mathbb{E}[X^2]\,\mathbb{E}[Y^2]$, we can upper bound $\mathbb{E}^2[d\, f(d)]$ as 
\begin{equation}\label{upperbound}
\mathbb{E}^2[d\, f(d)]\leq \mathbb{E}\left[\left(d^2+\rho^*\right) f^2(d)\right]\mathbb{E}\left[\frac{d^2}{d^2+\rho^*}\right],
\end{equation}
where  equality holds when $X=\alpha Y$, i.e., $f^2(d)=\frac{\alpha d^2}{\left(d^2+\rho^*\right)^2}$ for some $\alpha\in\mathbb{R}$. Then, plugging \eqref{upperbound} into \eqref{prob2} gives a lower bound on the objective function, i.e.,
$$\frac{\mathbb{E}\left[\left(d^2+\rho^*\right) f^2(d)\right]}{\mathbb{E}^2[d\, f(d)]}\geq \frac{1}{\mathbb{E}\left[\frac{d^2}{d^2+\rho^*}\right]},$$
which can be achieved by a class of functions satisfying $f^2(x)=\frac{\alpha x^2}{\left(x^2+\rho^*\right)^2}.$ Since $f>0$, we get $f^*$ in Theorem \ref{the:optf}.
\end{proof}

Substituting the optimal $f^*$ into  \eqref{eqn:P}, we get the optimal precoding matrix:
\begin{equation}\label{optP}
\bP_{\text{opt}}=\bH^\mathsf{H}(\bH\bH^\mathsf{H}+\rho^*\mathbf{I})^{-1},
\end{equation}
where we have ignored the positive scaling factor $\alpha$ in $f^*$ as it will be absorbed into the one-bit quantizer.
Note that $\bP_{\text{opt}}$ in \eqref{optP} is exactly the regularized ZF (RZF) precoding matrix \cite{RZF}, and the regularization parameter $\rho^*$ is determined by the system parameters $\gamma=N/K$ and $\sigma^2$. 

By specifying $f=f^*$ in \eqref{snr}, we can obtain the asymptotic SNR  for the above optimal one-bit precoder:
$$\overline{\text{SNR}}_{\text{opt}}= \frac{\sqrt{u^2+4\rho^*}+u}{2\rho^*}-1,~~u=\rho^*+\gamma-1.$$
Note that when $\gamma$ is large,  $\sqrt{u^2+4\rho^*}\approx u$ and hence 
\begin{equation}\label{ZFRZF}
\overline{\text{SNR}}_{\text{opt}}\approx \frac{u}{\rho^*}-1=\overline{\text{SNR}}_{\text{ZF}},
\end{equation}
i.e., one-bit ZF precoding is nearly optimal when the antenna-user ratio $\gamma$ is large. 
\vspace{-0.1cm}
\section{Simulation Results}
\vspace{-0.05cm}
In this section, we provide some simulation results to verify the analytical results obtained in the previous sections.

In Fig. \ref{ser1}, we validate the accuracy of the SEP prediction given in \eqref{sep}. We {\color{black}consider the noiseless case} and depict the SER performance versus $\gamma$  in Fig. {\color{black}\ref{ser1}(a)}, and fix $\gamma=6$ and depict the SER performance versus the SNR in Fig. {\color{black}\ref{ser1}(b)}. As shown in the figure, the asymptotic result in \eqref{sep} gives an accurate prediction of the SEP performance even for a small system with $K=20$. Furthermore, for a large system of $K=100$, the simulation result is almost identical to the asymptotic result. 

In Fig. \ref{ser2}, we demonstrate the optimality of the precoding scheme given in \eqref{optP} by comparing it with classical one-bit MF and ZF precoding. It can be observed that the derived RZF precoding approach exhibits the optimal SER performance for both $\gamma=2$ and $\gamma=8$.  In particular, the RZF precoder demonstrates significant superiority when $\gamma=2$, while its performance is similar to ZF when $\gamma=8$, which validates our discussion in \eqref{ZFRZF}.
\begin{figure} [t]
\centering    
  \hspace{-0.06\columnwidth}
\subfigure[SER versus $\gamma$.] { 
\includegraphics[width=0.52\columnwidth]{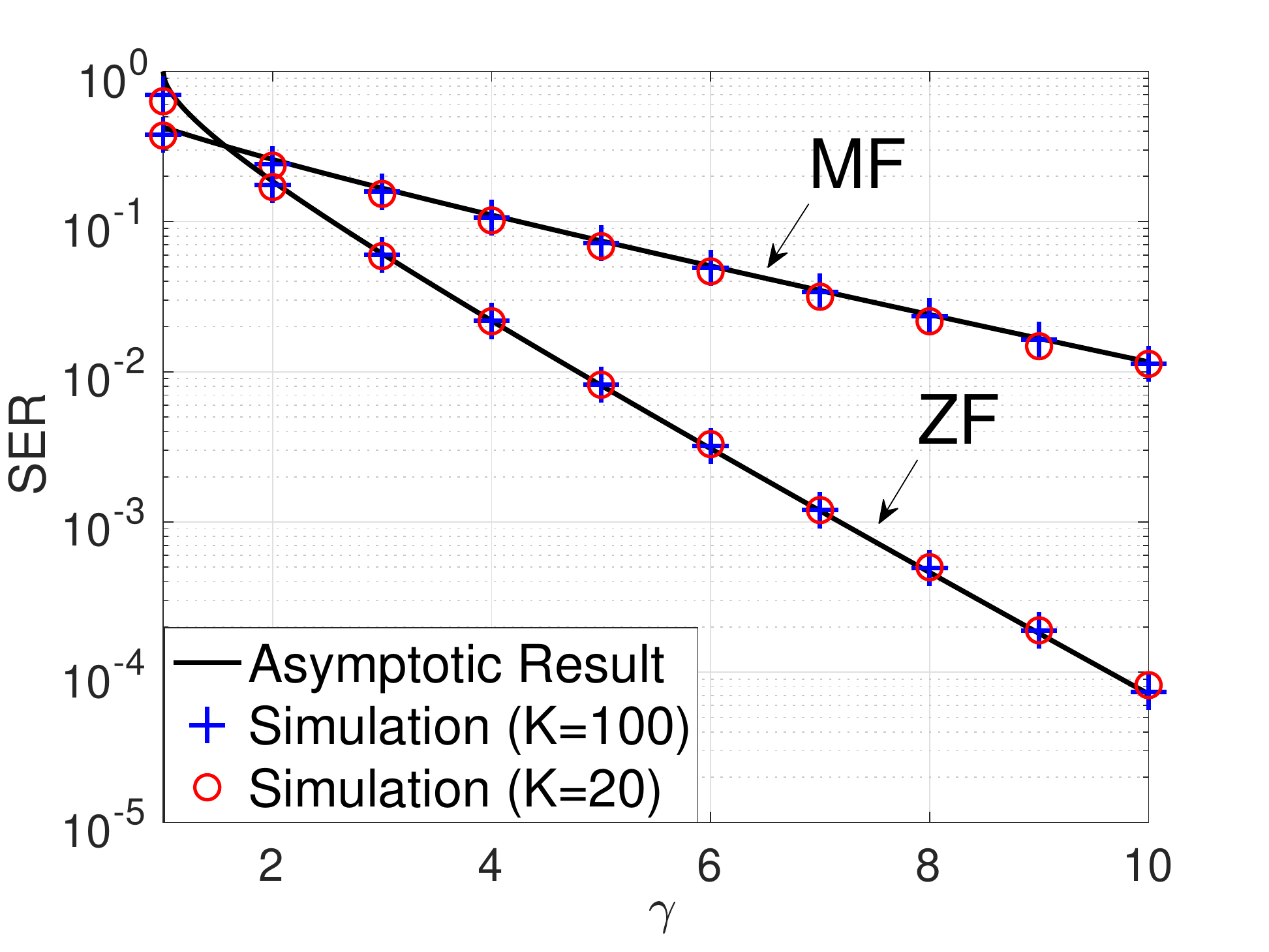}  
}     
\hspace{-0.08\columnwidth}
\subfigure[SER versus SNR.] { 
\includegraphics[width=0.52\columnwidth]{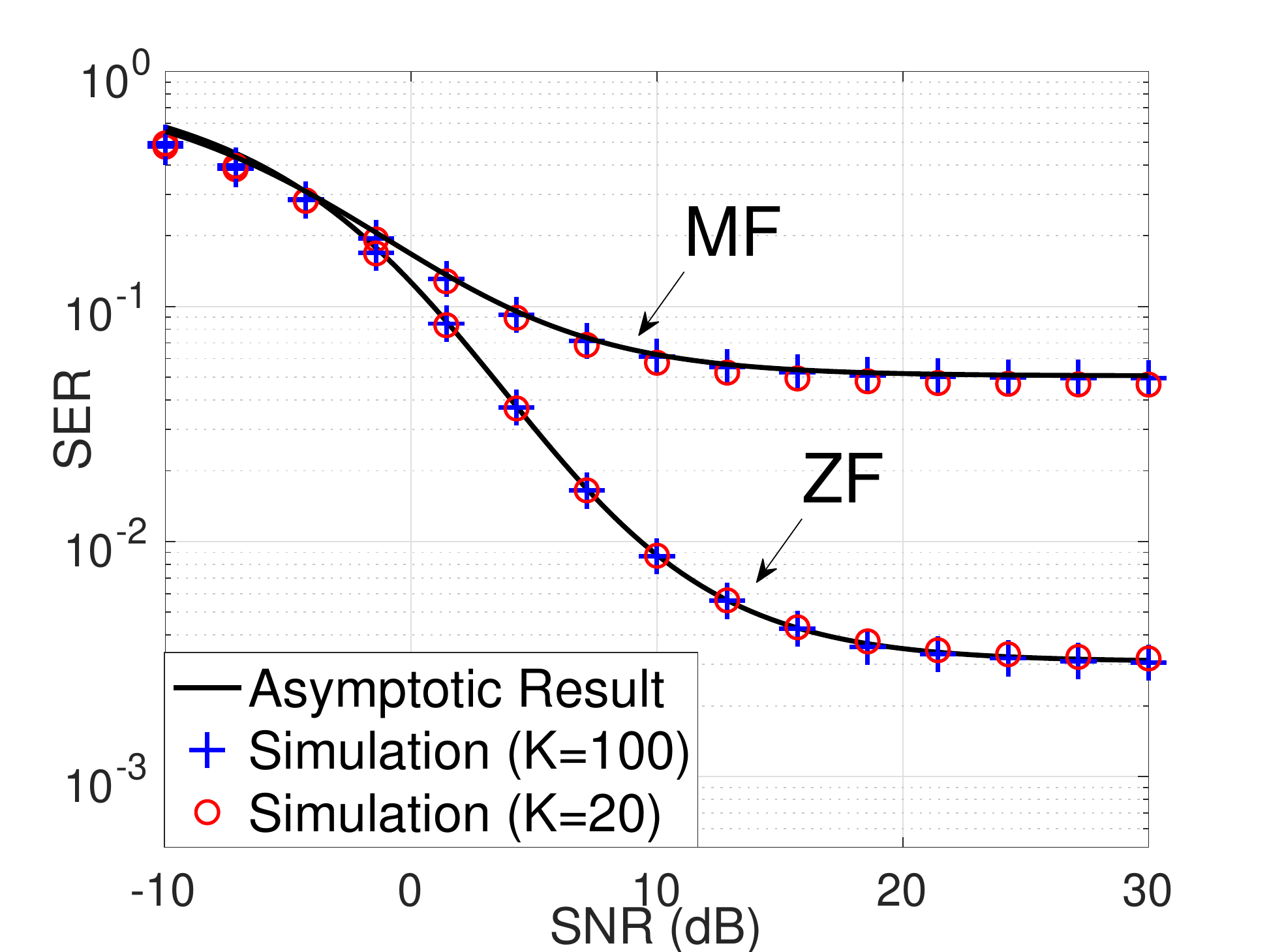}  
}    \vspace{-0.2cm}
\caption{Asymptotic and simulated SERs.}
 \label{ser1}
 \vspace{-0.1cm}
\end{figure}
 \begin{figure} [t]
\centering    
  \hspace{-0.06\columnwidth}
\subfigure[$\gamma=2$.] { 
\includegraphics[width=0.52\columnwidth]{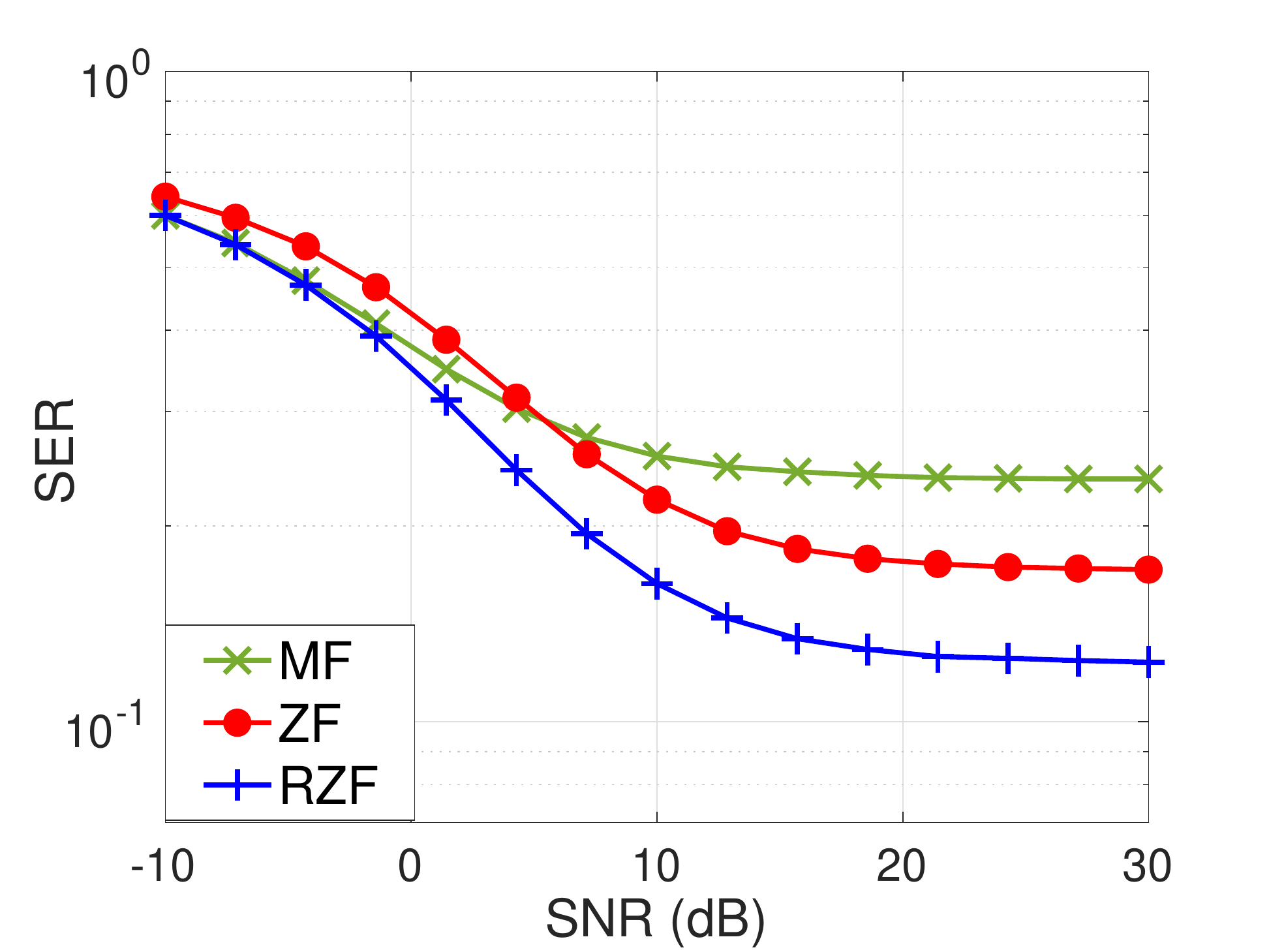}  
}     
\hspace{-0.08\columnwidth}
\subfigure[$\gamma=8$.] { 
\includegraphics[width=0.52\columnwidth]{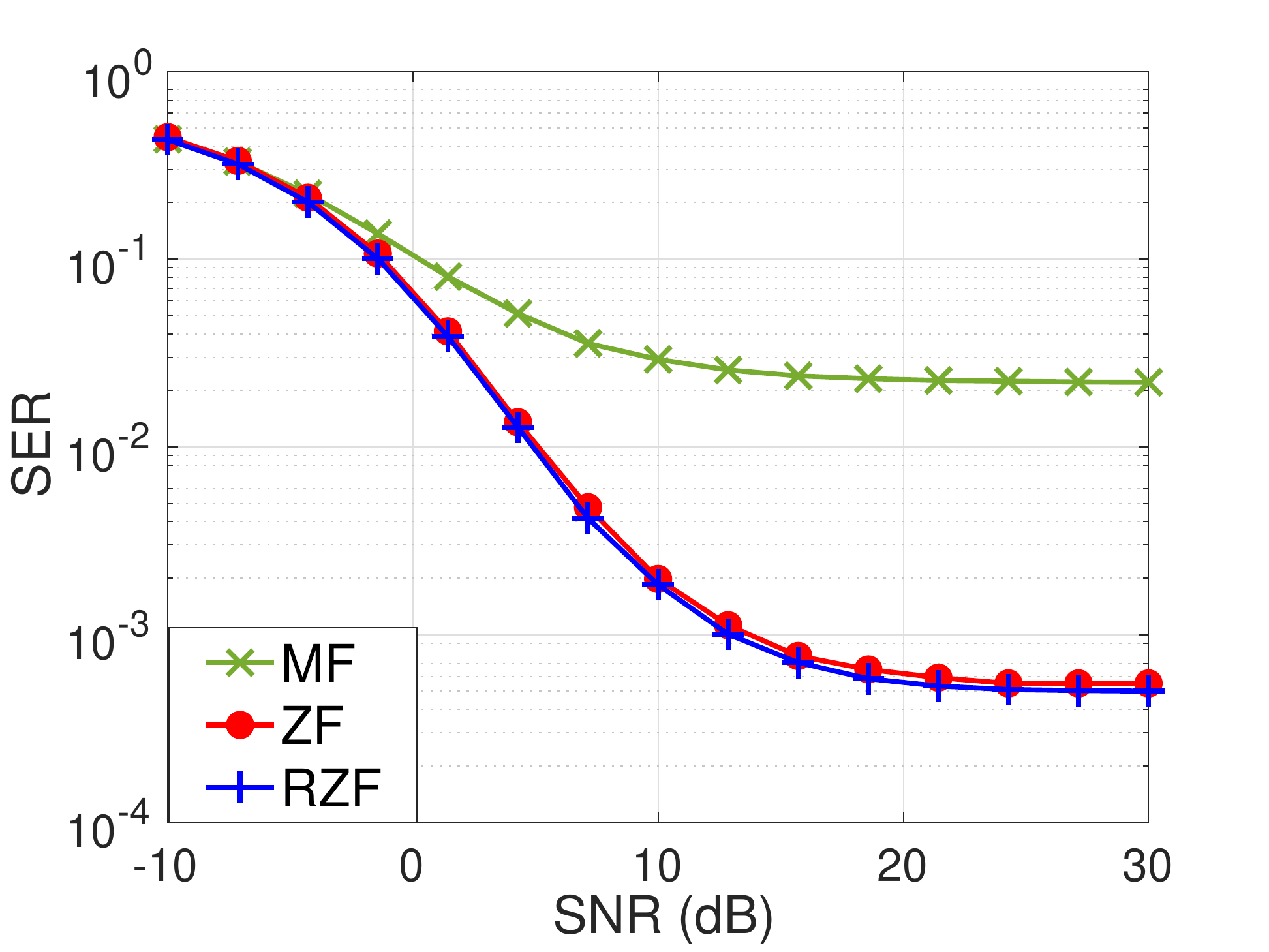}  
}    \vspace{-0.25cm}
\caption{SER performance versus SNR for different linear one-bit precoding schemes with $K=20$.}
 \label{ser2}
 \vspace{-0.4cm}
\end{figure}
\vspace{-0.35cm}
\section{{\color{black}Conclusion}}
\vspace{-0.05cm}
This paper has investigated the performance of a wide class of linear one-bit precoders for  massive MIMO systems. Through the asymptotic analysis, we have derived sharp SEP formulas for the considered linear one-bit precoders. We have also derived the optimal linear one-bit precoder, which corresponds to RZF precoding whose regularization parameter is determined by the ratio of transmit antennas to users and the variance of the additive noise. Our analysis is based on a novel and general analytical framework. An interesting future work is to  extend our framework to analyze more complicated scenarios, such as general  quantized precoding.



	


\vspace{-0.1cm}

\end{document}